\newtheorem{thmintro}{Theorem}
\newtheorem{theorem}{Theorem}[section]
\newtheorem*{theorem*}{Theorem}
\newtheorem{corollary}{Corollary}
\newtheorem{lemma}[theorem]{Lemma}
\newtheorem*{question}{Question}
\theoremstyle{definition}
\newtheorem{remark}{Remark}
\newtheorem{example}{Example}
\newcommand{\C} {\ensuremath{\mathbb{C}}}
\newcommand{\R} {\ensuremath{\mathbb{R}}}
\newcommand{\N} {\ensuremath{{\rm N}}}
\newcommand{\Z} {\ensuremath{\mathbb{Z}}}
\renewcommand{\P} {\ensuremath{\mathbb{P}}}
\newcommand{\F} {\ensuremath{\mathbb{F}}}
\newcommand{\calC}{\mathcal{C}}
\newcommand{\calD}{\mathcal{D}}
\newcommand{\GL} {{\rm GL}}
\newcommand{\PGL} {{\rm PGL}}
\newcommand{\barS}{\bar S}
\newcommand{\RM} {\mathcal{R}\mathcal{M}}
\newcommand{\PRM} {\mathcal{P}\mathcal{R}\mathcal{M}}
\DeclareMathOperator{\supp}{supp}
\DeclareMathOperator{\id}{id}
\DeclareMathOperator{\weight}{wt}
\DeclareMathOperator{\Stab}{Stab}
\DeclareMathOperator{\Tr}{Tr}
\title[Symmetries of weight enumerators] 
{Symmetries of weight enumerators and applications to Reed-Muller codes}
\author[M. Borello and O. Mila]{Martino Borello and Olivier Mila}
\begin{document}
\maketitle


\begin{abstract}
Gleason's 1970 theorem on weight enumerators of self-dual codes has played a crucial role for research in coding theory during the last four decades. Plenty of generalizations have been proved but, to our knowledge, they are all based on the symmetries given by MacWilliams' identities. This paper is intended to be a first step towards a more general investigation of symmetries of weight enumerators. We list the possible groups of symmetries, dealing both with the finite and infinite case, we develop a new algorithm to compute the group of symmetries of a given weight enumerator and apply these methods to the family of Reed-Muller codes, giving, in the binary case, an analogue of Gleason's theorem for all parameters.
\end{abstract}

\tableofcontents

\section{Introduction}\label{sec:intro}
Gleason's 1970 theorem about the weight enumerators of self-dual codes is, as Sloane puts it, ``one of the most remarkable theorems in coding theory'' \cite{Slo06}. Weight enumerators of self-dual doubly-even codes have a group of symmetry containing a subgroup of order 192, that is generated by the symmetry coming from MacWilliams' identities related to the self-duality condition and by the symmetry coming from the divisibility condition. This simple observation led Gleason to prove, by classical arguments from invariant theory, that the weight enumerator of a self-dual doubly-even code belongs to the polynomial ring $\C[w_{\hat{\mathcal{H}}_3}(x,y),w_{\mathcal{G}_{24}}(x,y)]$, where $w_{\hat{\mathcal{H}}_3}(x,y)$ and $w_{\mathcal{G}_{24}}(x,y)$ are the weight enumerators of the extended Hamming code of length 8 and of the extended binary Golay code of length 24 respectively \cite{Gl}.
The importance of Gleason's Theorem is surely due to its fecundity and to the numerous new research problems it generated. For example, it implies that self-dual doubly-even codes exist only for lengths multiple of 8.
Moreover, Mallows and Sloane derived upper bounds on the minimum distance of such codes using Gleason's Theorem, leading to the notion of extremal codes (those which attain the bound) \cite{MS73}.
Finally, and most remarkably, weight enumerators of extremal self-dual doubly-even codes can be determined starting from Gleason's theorem: for some lengths the corresponding codes are known, and it is useful
to have their weight enumerators on record, in the other cases it is hoped that knowledge of the weight enumerator will assist in deciding the existence of the codes.
The long-standing open problem on the existence of an extremal self-dual doubly-even code of length 72 \cite{S73} is probably one of the main examples of the active research that has arisen from Gleason's theorem (see \cite{B15} for a summary of the some of the most recent results about it).

Many generalizations of Gleason's theorem to other family of self-dual codes have been proved, as documented in \cite{NRS06}. All of them make use of MacWilliams' identities and their generalizations, which give a symmetry of the weight enumerator only if the code is self-dual or eventually formally self-dual. To our knowledge, there is no systematic research on other cases, that is codes for which MacWilliams' identities do not give a symmetry. However, many interesting families of codes (e.g. Reed-Muller codes) do not have this property, and it would be useful to have a similar result about their weight enumerators. This paper is intended to be a first step in this direction.

The following questions guided our work:
\begin{enumerate}
    \item Which are the possible groups of symmetries of a weight enumerator?
    \item If we know a weight enumerator of a code, how can we compute efficiently its symmetries?
    \item Once that we have computed some symmetries of the weight enumerator of a code, can we prove that they are symmetries of other weight enumerators of codes belonging to the same family (as in the case of self-dual doubly-even codes)?
    \item Can we determine, with these methods, new properties or unknown weight enumerators?
\end{enumerate}

Note that answering these questions is in general quite difficult, since the nature of weight enumerator is essentially combinatorial while the codes are geometric objects.
It is hence not evident how and if properties of codes will give rise to symmetries of their weight enumerators.

About the first two questions, a partial answer is given in \cite{BO00} for general polynomials. In that paper the authors address the problem of finding symmetries of homogeneous two-variables polynomials and they develop an algorithm based on moving frames to compute their group of symmetries. In the paper this algorithm is implemented in {\sc Maple}, but as the authors say, this computer algebra system has the weakness of a poor handling with both algebraic numbers and rational algebraic functions. We use some of their ideas to develop a new method which can be easily implemented in {\sc Magma} \cite{Magma} and we concentrate on the case of weight enumerators.

A homogeneous polynomial $p(x,y)\in \C[x,y]$ defines a variety $V(p(x,y))$ in the projective line $\mathbb{P}^1(\mathbb{C})$ (the set of ``roots'' of $p(x,y)$). The classical
action of ${\rm PGL}_{2}(\mathbb{C})$ on the projective line
$\mathbb{P}^1(\mathbb{C})$ (which is sharply $3$-transitive) induces an action of the projection of the group of symmetries of $p(x,y)$ in ${\rm PGL}_{2}(\mathbb{C})$ on $V(p(x,y))$.
We  will  use  this  simple  observation  later  on  to  prove  the following result.

\begin{thmintro} \label{thm:finite-roots}The group of symmetries of the weight enumerator $p(x,y)$ of a code is finite if and only $\#V(p(x,y))\geq 3$.
\end{thmintro}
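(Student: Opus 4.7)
The plan is to reduce the question about the symmetry group $G\leq\mathrm{GL}_2(\mathbb{C})$ of $p$ to the study of its image $\bar G\leq\mathrm{PGL}_2(\mathbb{C})$, and then to the permutation action of $\bar G$ on the finite set $V(p)$. First I would observe that the kernel of the projection $\pi\colon G\to\bar G$ consists of scalar matrices $\lambda I$ satisfying $p(\lambda x,\lambda y)=\lambda^{n}p(x,y)=p(x,y)$, hence $\lambda^{n}=1$ where $n=\deg p$. Thus the kernel has at most $n$ elements, and $G$ is finite if and only if $\bar G$ is.

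For the ``if'' direction, assume $\#V(p)\geq 3$. As noted before the statement, $\bar G$ acts on $V(p)$, yielding a homomorphism $\bar G\to\mathrm{Sym}(V(p))$. Any element of its kernel fixes three distinct points of $\mathbb{P}^{1}(\mathbb{C})$ pointwise, and by sharp $3$-transitivity of the $\mathrm{PGL}_2(\mathbb{C})$-action on $\mathbb{P}^{1}(\mathbb{C})$ this forces the element to be trivial. Hence $\bar G$ embeds into the finite group $\mathrm{Sym}(V(p))$, so $\bar G$, and therefore $G$, is finite.

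For the ``only if'' direction, assume $\#V(p)\leq 2$. Since $p$ is a nonzero form of positive degree, it factors either as $p=c\,\ell^{n}$ or as $p=c\,\ell_{1}^{a}\ell_{2}^{b}$ with $a,b\geq 1$, according as $\#V(p)=1$ or $2$. An element of $\mathrm{GL}_2(\mathbb{C})$ sending $(\ell_1,\ell_2)$ to $(x,y)$ (respectively $\ell$ to $y$) conjugates $G$ into the symmetry group of the model $cx^{a}y^{b}$ (respectively $cy^{n}$), so it suffices to exhibit infinite symmetry groups for these models. The unipotent family $\bigl(\begin{smallmatrix}1&t\\0&1\end{smallmatrix}\bigr)$, $t\in\mathbb{C}$, fixes $cy^{n}$; the diagonal family $\mathrm{diag}(\lambda,\mu)$ with $\lambda^{a}\mu^{b}=1$ fixes $cx^{a}y^{b}$ and descends to an infinite subset of $\mathrm{PGL}_{2}(\mathbb{C})$, since the scalar ambiguity contributes only the finitely many $n$-th roots of unity. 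In both cases $G$ is infinite.

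The only nontrivial ingredient is the sharp $3$-transitivity of $\mathrm{PGL}_2(\mathbb{C})$ on $\mathbb{P}^{1}(\mathbb{C})$, which is classical, so I do not anticipate a real obstacle; the care needed is simply to bookkeep the passage between $G$ and $\bar G$ and to handle the degenerate factorisation of $p$ when $\#V(p)\leq 2$.
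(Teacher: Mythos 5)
Your proposal is correct and follows essentially the same route as the paper: both use the induced action of $\bar G\leq\PGL_2(\C)$ on $V(p(x,y))$ together with sharp $3$-transitivity for the finite direction, and reduce to the model $x^ay^b$ with an explicit infinite family of symmetries for the other direction. The only cosmetic difference is that the paper packages the finite case as a quantitative lemma ($\#S(p(x,y))\leq n!\,d$) by counting scalar preimages, whereas you argue via the exact sequence $1\to\ker\pi\to G\to\bar G$; the content is identical.
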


If $\#V(p(x,y))\geq 3$, the projection of the group of symmetries of the weight enumerator is a finite subgroup of ${\rm PGL}_{2}(\mathbb{C})$, and these subgroups are classified by Blichfeldt in \cite{B17} (see Theorem \ref{thm:Bli}).
Consequently, the group of symmetries is a central extension of one of these subgroups, and it is well-known that isomorphism classes of central extensions are counted by second cohomology groups. On the other hand, we provide an almost complete classification of linear codes for which $\#V(p(x,y))<3$ (see Theorem \ref{thm:2roots}).  This answers completely the question about the possible groups of symmetry. Moreover, the sharply $3$-transitive action of ${\rm PGL}_{2}(\mathbb{C})$ on
$\mathbb{P}^1(\mathbb{C})$ provides a tool to develop an efficient algorithm to compute symmetries of a given weight enumerator.

In order to give an answer to question (3), one has to choose a family to study. In this paper we deal with the family of Reed Muller codes (affine or projective), for their importance in coding theory and their relation with algebraic geometry.
Determining their weight enumerator is a difficult task that is related to the counting of $\mathbb{F}_q$-rational points on hypersurfaces. This connection between two difficult problems may lead to partial solutions, as shown in \cite{El} and \cite{Na}.
Along those lines, we contribute to the investigation of weight enumerators of Reed-Muller codes by studying their group of symmetries and the related invariant ring.
In the case of binary Reed-Muller codes we have the complete picture, while in the other cases we present only partial results.

\begin{thmintro} \label{thm:reed-muller-sym}
Let $\mathcal{C}:=\mathcal{RM}(r,m)$,
$i:=\lfloor\frac{m-1}{r}\rfloor$ if $r\neq 0$ and $i:=m$ otherwise,
and $j:=\lfloor\frac{m-1}{m-r-1}\rfloor$ if $r\neq m-1$ and $j:=m$
otherwise. Denote $\zeta'_n$ a primitive $n$-th root of unity and
$\barS$ the projection of the group of symmetries of the weight
enumerator $w_{\mathcal{C}}(x,y)$ of $\mathcal{C}$ in ${\rm
PGL}_{2}(\mathbb{C})$.  Then
\begin{enumerate}
    \item \label{itB:mleqr}if $m\leq r$, then $w_{\mathcal{C}}(x,y)=(x+y)^{2^m};$
    \item \label{itB:r<m<2r+1,jneq2}if $r<m<2r+1$ and $j\neq 2$, then $\barS=\left\langle
    \left[\begin{smallmatrix}1&0\\0 & -1\end{smallmatrix}\right],
    \left[\begin{smallmatrix}1+\zeta'_{2^j}&1-\zeta'_{2^j}\\1-\zeta'_{2^j} & 1+\zeta'_{2^j}\end{smallmatrix}\right]
    \right\rangle\cong D_{2^j}
    $ and
    \[
    w_{\mathcal{C}}(x,y)\in \C[w_{\mathcal{RM}(j-1,j)}(x,y),w_{\mathcal{RM}(j-1,j+1)}(x,y)];
    \]
    \item \label{itB:r<m<2r+1,j=2}if $r<m<2r+1$ and $j=2$, then $\barS\supseteq\left\langle
    \left[\begin{smallmatrix}1&0\\0 & -1\end{smallmatrix}\right],
    \left[\begin{smallmatrix}1+\zeta'_{4}&1-\zeta'_{4}\\1-\zeta'_{4} & 1+\zeta'_{4}\end{smallmatrix}\right]
    \right\rangle\cong D_{4}
    $,
    \[
    w_{\mathcal{C}}(x,y)\in \C[w_{\mathcal{RM}(1,2)}(x,y),w_{\mathcal{RM}(1,3)}(x,y)]
    \]
    and, if $\barS\neq D_4$, then
    \[
      w_{\mathcal{C}}\left(\frac{-x+((\zeta_8')^3+(\zeta_8')^2-\zeta_8')y}{2},\frac{((\zeta_8')^3-(\zeta_8')^2-\zeta_8')x+y}{2}\right)=\pm w_{\mathcal{C}}(x,y);
    \]
    \item \label{itB:m=2r+1}if $m=2r+1$, then $\barS=\left\langle
    \left[\begin{smallmatrix}1&0\\0 & \zeta'_4\end{smallmatrix}\right],
    \frac{1}{\sqrt{2}}\left[\begin{smallmatrix}1&1\\1 & -1\end{smallmatrix}\right]
    \right\rangle\cong S_4
    $ and
    \[
    w_{\mathcal{C}}(x,y)\in \C[w_{\hat{\mathcal{H}}_3}(x,y),w_{\mathcal{G}_{24}}(x,y)];
    \]
    \item \label{itB:m>2r+1,ineq2}if $m>2r+1$ and $i\neq 2$, then $\barS=\left\langle
    \left[\begin{smallmatrix}1&0\\0 & \zeta'_{2^i}\end{smallmatrix}\right],
    \left[\begin{smallmatrix}0&1\\1 & 0\end{smallmatrix}\right]
    \right\rangle\cong D_{2^i}
    $ and
    \[
    w_{\mathcal{C}}(x,y)\in \C[w_{\mathcal{RM}(0,i)}(x,y),w_{\mathcal{RM}(1,i+1)}(x,y)];
    \]
    \item \label{itB:m=2r+1,i=2}if $m>2r+1$ and $i=2$, then $\barS\supseteq \left\langle
    \left[\begin{smallmatrix}1&0\\0 & \zeta'_{4}\end{smallmatrix}\right],
    \left[\begin{smallmatrix}0&1\\1 & 0\end{smallmatrix}\right]
    \right\rangle\cong D_{4}
    $,
    \[
    w_{\mathcal{C}}(x,y)\in \C[w_{\mathcal{RM}(0,2)}(x,y),w_{\mathcal{RM}(1,3)}(x,y)]
    \]
    and, if $\barS\neq D_4$, then $w_{\mathcal{C}}\left(\frac{x+\zeta_8'y}{\sqrt{2}},\frac{(\zeta_8')^{-1}x-y}{\sqrt{2}}\right)=\pm w_{\mathcal{C}}(x,y)$.
\end{enumerate}

\end{thmintro}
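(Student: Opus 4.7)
My plan is to treat the six cases uniformly via two general sources of symmetry: (I) the McEliece divisibility theorem applied to $\mathcal{C} = \mathcal{RM}(r,m)$, giving the diagonal symmetry $\mathrm{diag}(1,\zeta'_{2^i})$ of $w_\mathcal{C}$; and (II) the same theorem applied to $\mathcal{C}^\perp = \mathcal{RM}(m-r-1,m)$, which together with the MacWilliams transform $\sigma = \left[\begin{smallmatrix} 1 & 1 \\ 1 & -1 \end{smallmatrix}\right]$ yields the symmetry $\sigma \cdot \mathrm{diag}(1,\zeta'_{2^j}) \cdot \sigma^{-1}$ of $w_\mathcal{C}$. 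A short computation shows that this last matrix equals $\frac{1}{2}\left[\begin{smallmatrix} 1+\zeta'_{2^j} & 1-\zeta'_{2^j} \\ 1-\zeta'_{2^j} & 1+\zeta'_{2^j} \end{smallmatrix}\right]$, and specializes, for $j = 1$, to the swap $\left[\begin{smallmatrix} 0 & 1 \\ 1 & 0 \end{smallmatrix}\right]$; symmetrically, for $i = 1$, source (I) becomes $\mathrm{diag}(1,-1)$. Case (\ref{itB:mleqr}) is immediate since $\mathcal{RM}(r,m) = \F_2^{2^m}$ when $m \leq r$. Case (\ref{itB:m=2r+1}) reduces to classical Gleason theory: $\mathcal{C}$ is self-dual and doubly-even (here $i = 2$), so $\sigma$ itself is a symmetry, giving $S_4$ and invariant ring $\C[w_{\hat{\mathcal{H}}_3},w_{\mathcal{G}_{24}}]$. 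In the remaining cases, sources (I) and (II) produce exactly the two generators listed in the statement, and the dihedral relations together with the orders in $\PGL_2(\C)$ follow from a routine matrix calculation showing the group generated is isomorphic to $D_{2^j}$ or $D_{2^i}$ accordingly.

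The main difficulty is to show that $\barS$ does not strictly contain this dihedral group, outside the exceptional cases $i = 2$ or $j = 2$. For this I invoke Blichfeldt's classification (Theorem \ref{thm:Bli}): the group $\barS$ is finite by Theorem \ref{thm:finite-roots} (a Reed-Muller code with $r<m$ admits at least three distinct nonzero weights, hence $w_\mathcal{C}$ has at least three projective roots), and must therefore be cyclic, dihedral, $A_4$, $S_4$, or $A_5$. Any strict enlargement of $D_{2^k}$ (order $2^{k+1}$) within this list is either a larger dihedral group or, when $k = 2$, the group $S_4$. A larger dihedral would require the cyclic rotation subgroup to grow, which after diagonalization would improve the divisibility of $\mathcal{C}$ or $\mathcal{C}^\perp$ beyond the McEliece bound; since that bound is sharp on Reed-Muller codes, no such enlargement is possible. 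The remaining possibility $\barS = S_4$ with $k = 2$ is precisely the dichotomy stated in (\ref{itB:r<m<2r+1,j=2}) and (\ref{itB:m=2r+1,i=2}); an explicit coset representative of $D_4$ in $S_4$ realizes the exceptional matrix appearing there, with the $\pm$ sign reflecting the ambiguity of the lift from $\PGL_2(\C)$ to $\GL_2(\C)$.

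For the ring inclusions, I apply the Shephard-Todd-Chevalley theorem to a suitable lift of $\barS$ to a complex reflection group in $\GL_2(\C)$, so that $\barS$-invariants form a polynomial ring in two homogeneous generators whose degrees are pinned down by a Molien series calculation. It then suffices to produce algebraically independent invariants of those degrees in the form of weight enumerators of Reed-Muller codes with smaller parameters. The natural candidates listed in the statement inherit invariance under sources (I) and (II) because the relevant divisibility and MacWilliams relations transfer to the smaller parameters: for cases (\ref{itB:r<m<2r+1,jneq2})--(\ref{itB:r<m<2r+1,j=2}), $w_{\mathcal{RM}(j-1,j)}(x,y) = (x+y)^{2^j}$ and $w_{\mathcal{RM}(j-1,j+1)}$ play the role of fundamental invariants; for cases (\ref{itB:m>2r+1,ineq2})--(\ref{itB:m=2r+1,i=2}), analogously with $w_{\mathcal{RM}(0,i)}(x,y) = x^{2^i} + y^{2^i}$ and $w_{\mathcal{RM}(1,i+1)}$. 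Algebraic independence and agreement of degrees with the Molien data is a direct check. The most delicate ingredient throughout is the sharpness of the McEliece divisibility bound used to rule out enlargements of $\barS$, which I would cite from the Reed-Muller weight-structure literature rather than reprove here.
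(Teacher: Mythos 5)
Your overall strategy coincides with the paper's: obtain $D_{2^i}$ (resp.\ its MacWilliams conjugate $D_{2^j}$) from Ax's divisibility theorem together with the all-ones symmetry, transfer between the regimes $m<2r+1$ and $m>2r+1$ by duality, cap the group via Blichfeldt's classification and sharpness of the divisibility bound, and identify the invariant ring (your Shephard--Todd--Chevalley/Molien route replaces the paper's direct computation and is fine; note that for Reed--Muller codes the relevant sharp divisibility statement is Ax's theorem, not McEliece's). The serious gap is in cases (\ref{itB:r<m<2r+1,j=2}) and (\ref{itB:m=2r+1,i=2}). Knowing abstractly that $\barS\cong S_4$ contains the fixed copy $\bar A\cong D_4$ does not produce the specific matrix $v$ appearing in the statement: one must classify \emph{all} subgroups of $\PGL_2(\C)$ isomorphic to $S_4$ containing the given $\bar A$. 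This is the content of Lemma \ref{lemma:imp}: computing the normalizer of $\bar A$ shows the conjugating element lifts to $\left[\begin{smallmatrix}\zeta_{16}&0\\0&\zeta_{16}^{-1}\end{smallmatrix}\right]$ or $\left[\begin{smallmatrix}0&\zeta_{16}\\-\zeta_{16}^{-1}&0\end{smallmatrix}\right]$, and there are then exactly two possible positions for the extra involution: the MacWilliams matrix $w$ itself (when $\zeta_{16}$ is not primitive), which is excluded because these codes are not formally self-dual (formal self-duality forces $\dim\calC=2^{m-1}$, i.e.\ $m=2r+1$) and because anti-invariance under $w$ contradicts positivity of the dual weight enumerator; or the matrix $v$ (when $\zeta_{16}$ is primitive). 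Your single sentence about ``an explicit coset representative of $D_4$ in $S_4$'' elides this dichotomy; taken literally it would point to the Gleason matrix, which is the \emph{excluded} branch, and would leave the stated matrix $v$ unjustified.

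Two smaller but genuine gaps. To rule out $\barS\supsetneq D_{2^i}$ you must exclude dihedral overgroups $D_k$ with $2^i\mid k$ and $k$ \emph{not} a power of $2$; Ax's sharpness only concerns powers of the prime $2$, so divisibility by an odd factor is not excluded by it. The paper closes this using the minimum distance of $\RM(r,m)$, which is $2^{m-r}$: any integer dividing all weights divides a power of $2$. Finally, your finiteness argument (``at least three distinct nonzero weights, hence at least three projective roots'') is a non sequitur --- $(x^2+y^2)^{n/2}$ has many distinct weights but only two roots; finiteness should instead be deduced from the classification in Theorem \ref{thm:2roots}, whose three exceptional weight enumerators are easily seen not to occur for $\RM(r,m)$ with $0\le r<m$.
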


\begin{remark}
The polynomials that are generators of the polynomial rings in Theorem \ref{thm:reed-muller-sym} are known (see the end of Section \ref{sec:back}).
\end{remark}

A crucial role is played by a divisibility condition proved by Ax in 1964 \cite{A} for the affine version of
Reed-Muller codes. The counterpart for the projective version can be
easily deduced.

Finally, the fourth question is very interesting from a coding theoretical point of view but is left unanswered in this paper whose aim is mainly to study symmetries of weight enumerators. The authors hope to answer it in future work.

In Section \ref{sec:back} we present the necessary background about coding
theory, invariant theory and we recall some properties of Reed-Muller codes.
Section \ref{sec:sym} is devoted to the proof of Theorem \ref{thm:finite-roots}, the discussion of the possible finite groups of symmetries and a classification of codes with weight enumerators having an infinite group of symmetries.
In Section \ref{sec:algo} we present our algorithm for finding the group of symmetries of weight enumerators.
Finally, Section \ref{sec:Reed-Muller} is concerned with the proof of Theorem \ref{thm:reed-muller-sym} and some further results about Reed-Muller codes over larger fields.

\section{Background}\label{sec:back}
For the convenience of the reader we recall here some definitions and results of coding theory and invariant theory which will be useful in the sequel. Three standard references are \cite{MS}, \cite{HP10} and \cite{RS}.

\subsection{Codes and weight enumerators}

A \emph{linear code} $\calC$ is a subspace of $\F_q^n$, where $n$ is
a positive integer called the \emph{length} of the code. A
\emph{generator matrix} of a linear code $\calC$ is a matrix whose
rows generate $\calC$. Elements of $\calC$ are called
\emph{codewords}. The support of a codeword $c\in\calC$, denoted
$\supp(c)$, is defined as
\[
\supp(c):=\{i\in \{1,\ldots,n\} \ | \ c_i\neq 0\}.
\]
The \emph{weight} $\weight(c)$ of a codeword $c$ is the cardinality its support.
The \emph{weight enumerator} $w_\calC(x,y)$ is the polynomial
\[
w_\calC(x,y) = \sum_{c \in \calC} x^{n- \weight(c)} y^{\weight(c)} =
\sum_{i=0}^n A_i x^{n-i} y^{i},  \quad A_i := \#\{c\in \calC \ | \ \weight(c)=i\}.
\]

If $\calC\leq \F_q^m$ and $\calD\leq \F_q^n$ are linear codes with
generator matrices $C$ and $D$ respectively, their \emph{direct sum}
is the vector space $\calC \oplus \calD$ naturally embedded in
$\F_q^{m+n}$, i.e. the code with generator matrix
$\left[\begin{smallmatrix}C & 0 \\ 0 & D\end{smallmatrix}\right]$.
Observe that
$w_{(\calC \oplus \calD)}(x,y)= w_\calC(x,y) \cdot w_\calD(x,y)$.

A \emph{monomial transformation} is a linear transformation of the form $\F_q^n \to \F_q^n, v  \mapsto D P v$,
where $D$ is an $n \times n$ diagonal matrix with non-zero diagonal
entries, and $P$ is an $n \times n$ permutation matrix. Two codes
are said to be \emph{equivalent} if one is the image of the other
under a monomial transformation. Observe that in the case $q=2$, a
monomial transformation is just a permutation. It is easy to see
that two equivalent codes have the same weight enumerator.

\vspace{3mm}

\textbf{Convention:}
Since we are interested in weight enumerators, we will usually identify
codes up to equivalence.
In particular, a generator matrix of a code $\calC$ will mean a generator matrix of some
code equivalent to $\calC$.
\vspace{3mm}

Let $m > 1$ be an integer.
A linear code $\calC$ is \emph{divisible by $m$} if the weight of every codeword of
$\calC$ is divisible by $m$.
A code is \emph{divisible} if it is divisible by some $m>1$.

The \emph{dual} of a linear code $\calC \leq \F_q^n$ (denoted
$\calC^\perp$) is the orthogonal space with respect to the standard
inner product of $\F_q^n$, $\langle x, y \rangle = \sum_{i=1}^n x_i
y_i $ for $x, y \in \F_q^n$, i.e.
\[
\calC^\perp:=\{v\in \F_q^n \ | \ \langle v,c\rangle=0 \ \text{for all }c\in \calC\}.
\]
A linear code is \emph{self-dual} if $\calC^\perp = \calC$.

An important relation between the weight enumerator of a code $\calC$ and its dual $\calC^\perp$ is given
by MacWilliams' Theorem.
\begin{theorem*}[MacWilliams,  \cite{MS}]
  Let $\calC\leq \F_q^n$ be a linear code, and $\calC^\perp$ its dual. Then
  \[
  W_{\calC^\perp}(x,y) = \frac{1}{\#\calC} W_\calC(x + (q-1)y, x-y).
  \]
\end{theorem*}

\subsection{Symmetries and invariant ring}

The group $\GL_2(\C)$ acts naturally on the space $\C[x,y]$ on the right:
for $g=\left[\begin{smallmatrix}a&b\\c&d\end{smallmatrix}\right]\in
\GL_2(\C)$, and $p(x,y)\in \C[x,y]$, we have
\[
p^g(x,y): = (p\circ g) (x,y) = p(ax+by,cx+dy).
\]
We set $S(p(x,y)) := \Stab_{\GL_2(\C)} (p(x,y))$, the group of \emph{symmetries} of $p(x,y)$.

\vspace{3mm}

\textbf{Notation:}
We denote by $\zeta_n$ any $n$-th root of unity and by $\zeta'_n$ any \emph{primitive} $n$-th root of unity.
We call $C_n$ the cyclic group of order $n$, $D_n$ the dihedral group of order $2n$, $V_4$ the Klein four-group and $S_n$ and $A_n$ the symmetric group and the alternating group on $n$ symbols, respectively.

\vspace{3mm}

\begin{example}
A code $\calC$ is divisible by $m$ if and only if
\[
d_{m}:=\left[\begin{smallmatrix}1&0\\0&\zeta'_{m}\end{smallmatrix}\right]\in S(w_\calC(x,y)).
\]
\end{example}

\begin{example}
If $\calC$ is self-dual, then $\#\calC = \#\calC^\perp = q^{n/2}$ and so MacWilliams' Theorem implies that
\[ s_q:=q^{-{1/2}}\left[
\begin{smallmatrix}
   1 & q-1 \\
  1 & -1
\end{smallmatrix}\right]\in S(w_\calC(x,y)) \qquad (\star).
\]

\end{example}

A linear code satisfying $(\star)$ is called \emph{formally self
dual}.

\noindent If $G\leq \GL_2(\C)$ is a group of matrices, the ring
\[
\C[x,y]^G:=\{p(x,y) \in \C[x,y] \ | \ p^g(x,y) = p(x,y)\ \text{for all }g\in G\}.
\]
is called the \emph{invariant ring} of $G$.

\begin{theorem*}[Gleason, \cite{Gl}]
The weight enumerator of a self-dual binary linear code which is
\emph{doubly-even}, i.e. is divisible by $4$, lies in the ring
\[
\C[w_{\hat{\mathcal{H}}_3}(x,y),w_{\mathcal{G}_{24}}(x,y)],
\]
where $w_{\hat{\mathcal{H}}_3}(x,y)$ and $w_{\mathcal{G}_{24}}(x,y)$ are the weight enumerators of the extended Hamming code of length $8$ and of the extended binary Golay code of length $24$ respectively. This polynomial ring is the invariant ring of $G:=\langle d_4,s_2\rangle$.
\end{theorem*}

\subsection{Reed-Muller codes}

The \emph{Reed-Muller code} $\mathcal{RM}_q(r,m)$ on $m$ variables,
of degree $r$ and defined over $\F_q$ is the code
\[
\mathcal{RM}_q(r,m):=\left\{(p(v))_{v\in \F_q^m}\ | \ p\in \F_q[x_1,\ldots,x_m]_r\right\},
\]
where $\F_q[x_1,\ldots,x_m]_r$ is the set of polynomials in $m$
variables with coefficients in $\F_q$ and of degree at most $r$.
The code $\mathcal{RM}_q(r,m)$ encodes all hypersurfaces in
$\mathbb{A}^m(\F_q)$ of degree at most $r$, so determining the weight enumerator of such a code is equivalent to counting
$\F_q$-rational points of hypersurfaces in the affine space.

Similarly we can define the \emph{projective Reed-Muller code}
$\mathcal{PRM}_q(r,m)$ on $m$ variables, of degree $r$ and defined
over $\F_q$ in the following way:
\[
\mathcal{PRM}_q(r,m):=\{(p(v))_{v\in R}\ | \ p\in \F_q[x_0,\ldots,x_m]^h_r\} \cup \{0\} ,
\]
where $\F_q[x_1,\ldots,x_m]^h_r$ is the set of degree $r$
homogeneous polynomials in $m+1$ variables with coefficients in
$\F_q$, and $R$ is a set of representatives in $\F_q^{m+1}$ of all the points of
$\mathbb{P}^m(\F_q)$.
Observe that changing the set of representatives $R$ gives rise to an equivalent code.

If $q=2$ we will usually omit the subscript and write $\RM(r,m)$ for $\RM_2(r,m)$ and $\PRM(r,m)$ for $\PRM_2(r,m)$.

Both Reed-Muller codes and projective Reed-Muller codes are
divisible codes for certain parameters, as a consequence of a theorem by Ax.

\begin{theorem*}[Ax, \cite{A}]
For any integers $r$, $m$ and prime power $q=p^v$, the
Reed-Muller code $\mathcal{RM}_q(r,m)$ is divisible by $q^{\lfloor \frac{m-1}{r}\rfloor}$
and this is the largest power of
the prime $p$ with this property.
\end{theorem*}

It follows from an easy argument relating the projective and affine zeroes of a homogeneous polynomial that the projective Reed-Muller code $\mathcal{PRM}_q(r,m)$ is divisible by $q^{\lfloor \frac{m}{r}\rfloor}$.

Note that in general Reed-Muller codes are not self-dual, but if $q$ is a prime power and $r$, $m$ are integers such that $r < m(q-1)$, then
\[
\mathcal{RM}_q(r,m)^\perp=\mathcal{RM}_q(m(q-1)-r-1,m).
\]

Finally, for some parameters the weight enumerators of Reed-Muller codes are known:
\[
\begin{array}{l|l}
    \calC & w_\calC(x,y) \\\hline
     \RM_q(0,m) &  x^{q^m} + (q-1) y^{q^m} \\
     \RM_q(1,m) & x^{q^m} + q(q^m-1)x^{q^{m-1}}y^{(q-1)q^{m-1}} + (q-1)y^{q^m}
\end{array}
\]
The weight enumerators of $\mathcal{RM}_q(m(q-1)-1,m)=\mathcal{RM}_q(0,m)^\perp$ and
$\mathcal{RM}_q(m(q-1)-2,m)=\mathcal{RM}_q(1,m)^\perp$ can then be derived using MacWilliams' transformations.


\section{Group of Symmetries}\label{sec:sym}
This section is devoted to studying the possible groups of symmetries that a weight enumerator can have.
In the first subsection, we prove Theorem \ref{thm:finite-roots} which allows us to decide if a polynomial has a finite or infinite group of symmetries.

If the group is finite, the classification of finite subgroups of $\PGL_2(\C)$ by Blichfeldt and the study of the second cohomology of these subgroups allow us to classify all possible finite groups of symmetries.
This is treated in the second subsection.

The third subsection handles the case of infinite group of symmetries of weight enumerators and gives an almost complete classification of linear codes with this property.

\subsection{Proof of Theorem \ref{thm:finite-roots}}

For a homogeneous polynomial $p(x,y)\in \C[x,y]$ let $V(p(x,y)) \subseteq \P^1(\C)$ denote the projective variety defined by the vanishing of $p(x,y)$, i.e.
  \[
    V(p(x,y)) = \{(x:y) \in \P^1(\C) \: | \: p(x,y) = 0\}.
  \]

\begin{lemma}\label{lemma-finite}
  Let $p(x,y) \in \C[x,y]$ be a homogeneous polynomial of degree $d$, and let $n=\#V(p(x,y))$. If $n\geq 3$, then
  \[
  \#S(p(x,y)) \leq n!\, d.
  \]
\end{lemma}
\begin{proof}


  Let $G := S(p(x,y))$. Since every $g \in G$ fixes $p(x,y)$, $G$ acts on $V(p(x,y))$.
  Scalar matrices fix $V(p(x,y))$ point-wise, so this action induces an action of $\bar G \subseteq \PGL_2(\C)$,
  where $\bar G$ is the image of $G$ in $\PGL_2(\C)$.
  It is well-known that $\PGL_2(\C)$ acts sharply 3-transitively on $\P^1(\C)$.
  Since $\#V(p(x,y)) \geq 3$, every permutation of $V$ is realized by at most one $\bar g$ in $\PGL_2(\C)$, whence $\bar G$ is finite.

  Any $\bar g \in \bar G$ has exactly $d$ pre-images in $G$: if $g\in G$ is such a pre-image,
  then all the pre-images in $\PGL_2(\C)$ are given by $\lambda g$ for $\lambda\in \C\setminus\{0\}$.
  But
  \[
  p^{(\lambda g)}(x,y)= \lambda^{d} p^g(x,y) = \lambda^{d}p(x,y),
  \]
  so that $\lambda g \in G$ if and only if $\lambda^{d} =1$.

  There are $n!$ permutations of $V(p(x,y))$, so at most $n!$ elements in $\bar G$.
  Thus $\#G \leq n! \, d$.
\end{proof}

Theorem \ref{thm:finite-roots} is now an easy consequence.

\begin{proof}[Proof of Theorem \ref{thm:finite-roots}]
If $V(p(x,y))\geq 3$, then $S(p(x,y))$ is finite by Lemma \ref{lemma-finite}.
If $V(p(x,y))<3$, then $p(x,y)$ is conjugate to $x^ny^m$ for $m,n\in \N\cup \{0\}$, which is easily seen to have an infinite group of symmetries.
\end{proof}

\subsection{The finite case}

If a homogeneous polynomial $p(x,y)\in \C[x,y]$ has finite group of symmetries $S(p(x,y))\subseteq\GL_2(\C)$, its image
$\barS(p(x,y))\subseteq \PGL_2(\C)$ is also finite.
The following theorem gives a classification of all finite subgroups of $\PGL_2(\C)$ up to conjugation.

\begin{theorem}[Blichfeldt \cite{B17}]\label{thm:Bli}
If $H\leq {\rm PGL}_2(\mathbb{C})$ is finite, then $H$ is conjugate
to one of the following groups:
\begin{enumerate}
\item $\left\langle\left[\begin{smallmatrix}1&0\\0&\zeta'_n\end{smallmatrix}\right]\right\rangle\cong
C_n$ for a certain $n\in \N$.
\item
$\left\langle\left[\begin{smallmatrix}1&0\\0&\zeta'_n\end{smallmatrix}\right],
\left[\begin{smallmatrix}0&1\\1&0\end{smallmatrix}\right]\right
\rangle\cong D_n$ for a certain $n\in \N$.
\item
$\left\langle\left[\begin{smallmatrix}1&0\\0&-1\end{smallmatrix}\right],
\left[\begin{smallmatrix}\zeta'_4&\zeta'_4\\1&-1\end{smallmatrix}\right]\right
\rangle\cong A_{4}$.
\item
$\left\langle\left[\begin{smallmatrix}1&0\\0&\zeta'_4\end{smallmatrix}\right],
\left[\begin{smallmatrix}\zeta'_4&\zeta'_4\\1&-1\end{smallmatrix}\right]\right
\rangle\cong S_{4}$.
\item
$\left\langle\left[\begin{smallmatrix}\zeta'_5&0\\0&\zeta_5'^4\end{smallmatrix}\right],
\left[\begin{smallmatrix}0&1\\-1&0\end{smallmatrix}\right],
\frac{1}{10}\left[\begin{smallmatrix}(\sqrt{5} + 5)\zeta'_5 - \sqrt{5} - 5 &(-2\sqrt{5} + 10)\zeta'_5 - 3\sqrt{5} + 5 \\ (-2\sqrt{5} + 10)\zeta'_5 - 3\sqrt{5} + 5
&(-\sqrt{5} - 5)\zeta'_5 + \sqrt{5} - 5\end{smallmatrix}\right] \right\rangle\cong A_{5}$.
\end{enumerate}
\end{theorem}

It follows that $S(p(x,y)) \subseteq \GL_2(\C)$ is (up to conjugation) a central
extension of the groups listed above by the cyclic group $C_d=\langle \left[\begin{smallmatrix}\zeta'_d&0\\0&\zeta'_d\end{smallmatrix}\right]\rangle$, where $d$ is the degree of $p(x,y)$.
These are classified by second cohomology groups, which are known to be the following:

\[
  \begin{array}{c|l}
    G & H^2(G, C_d) \\\hline\hline
    C_n & \Z/ \gcd(d,n) \Z \\\hline
    D_n & \left\{
          \begin{array}{ll}
            \Z/2 \Z \oplus \Z/2 \Z & d \text{ even, } n \text{ even}\\
            \Z/2 \Z  & d \text{ even, } n \text{ odd}\\
            0 & \text{otherwise}
          \end{array}
    \right. \\\hline
    A_4 &  \left\{
          \begin{array}{ll}
            \Z/2 \Z  & 2 \,|\, d, 3 \nmid d\\
            \Z/3 \Z  & 2 \nmid d, 3  \,|\, d\\
            \Z/2\Z \oplus \Z/3 \Z  & 2 \,|\, d, 3 \,|\, d\\
            0 & \text{otherwise}
          \end{array}
    \right. \\\hline
    S_4 &  \left\{
          \begin{array}{ll}
            \Z/2\Z \oplus \Z/2 \Z  & d \text{ even} \\
            0 & \text{otherwise}
          \end{array}
    \right. \\\hline
    A_5 &  \left\{
          \begin{array}{ll}
            \Z/2\Z  & d\text{ even} \\
            0 & \text{otherwise}
          \end{array}
                \right.
  \end{array}
  \]

Not all groups in the list above can actually occur, since we are looking at those which have a faithful representation in ${\rm GL}_2(\mathbb{C})$. This is not the case for example for $A_5$: in this case $d$ has to be even and the trivial extension $A_5\times C_d$ is not possible, so that the only possibility is ${\rm SL}(2,5)\times C_{d/2}$.

\begin{remark}
A natural question that arises is whether or not it is possible to realize the groups of Blichfeldt's Theorem as the group of symmetries of the weight enumerator of a code.
We provide some examples for the first four cases:
\begin{enumerate}
\item \label{it:RCn}If $\calC_n$ denotes the $[n,1,n]_q$ repetition code for $n\geq 3$ and $q > 2$, the direct sum $\calC = \calC_n \oplus \calC_{2n}$ realizes the group $C_n$:
\[
w_\calC(x,y) = x^{3n} + (q-1)x^{2n}y^n + (q-1)x^n y^{2n} + (q-1)^2 y^{3n} \quad \implies \quad \barS(w_\calC(x,y)) \cong C_n.
\]
\item \label{it:RDn}The $[n,1,n]_2$ repetition code for $n\geq 3$ realizes $D_n$:
\[
w_\calC(x,y) = x^n + y^n \quad \implies \quad \barS(w_\calC(x,y)) \cong D_n.
\]
\item \label{it:RA4}The $[12,6,6]_3$ ternary Golay code realizes $A_4$:
\[
w_\calC(x,y) = x^{12} + 264 x^6 y^6 + 440 x^3 y^9 + 24 y^{12}
\quad \implies \quad \barS(w_\calC(x,y)) \cong A_4.
\]
\item \label{it:RS4}The $[8,4,4]$ extended Hamming code realizes $S_4$:
\[
w_\calC(x,y) = x^8 + 14x^4 y^4 + y^8
\quad \implies \quad \barS(w_\calC(x,y)) \cong S_4.
\]
\end{enumerate}
These examples can be deduced by straightforward calculations
$($for part \ref{it:RCn} and \ref{it:RDn}$)$ and using our algorithm in Section \ref{sec:algo} $($for part \ref{it:RA4} and \ref{it:RS4}$)$.
It is however not clear if the group $A_5$ can be realized and how to realize it.
\end{remark}

\subsection{The infinite case}
We proceed to the classification of codes with weight enumerators having infinite group of symmetries.
We will need a lemma.
\begin{lemma} \label{lem:oddweight}
  Let $\calC$ be a code over $\F_q$ with $q \neq 2$ such that all codewords of $\calC$ have even weight.
  Let $x,y \in \calC$ be codewords, with $y$ of weight two and support $\supp(y) = \{i,j\}$.
  Then there exists $\lambda\in \F_q$ such that $(x_i, x_j)=(\lambda y_i, \lambda y_j)$.
\end{lemma}
\begin{proof}
  Suppose that $(x_i, x_j)\neq (\lambda y_i, \lambda y_j)$ for every $\lambda\in\F_q$.
  We show that for suitable $\mu \in \F_q$, the codeword $x + \mu y$ has odd weight.
  Since $(x_i, x_j) \neq 0 (y_i, y_j) = (0,0)$ we may assume without loss of generality that
  $x_i \neq 0$.

  If $x_j = 0$, pick any $\mu \in \F_q\setminus \{0, -x_iy_i^{-1}\}$ (which is non-empty since $q>2$).
  We have $x_i+\mu y_i\neq 0$ and therefore $\supp(x + \mu y)=\supp(x) \ \cup \ \{j\}$ has odd cardinality.
  If $x_j \neq 0$, we can set $\mu = -x_i y_i^{-1}$ and find $\supp(x + \mu y)=\supp(x) \setminus \{i\}$, which
  has again odd cardinality.
\end{proof}

As an immediate consequence we have:
\begin{corollary}\label{cor:disj}
Let $\calC$ be a code over $\F_q$ with $q \neq 2$. Assume all
codewords of $\calC$ have even weight. Let $c_1,\ldots,c_r\in \calC$
of weight $2$ such that $c_i \neq \lambda  c_j$ for any $\lambda \in \F_q$ and $i\neq j$. Then
\[
\supp(c_i)\cap \supp(c_j)=\emptyset
\]
for every $i\neq j$.
\end{corollary}

The first classification result is the following.
\begin{lemma} \label{lem:2roots}
  Let $\calC$ be a linear code of even length $n$ over $\F_q$ with $q \neq 2$.
  Suppose that
  \[
  w_\calC(x,y)=(x^2 + ay^2)^{n/2}, \quad a \in \R\setminus\{0\}.
  \]
  Then $a = q-1$ and
  \[
  \calC \cong \bigoplus_{i=1}^{n/2}\langle(1,1)\rangle_{\F_q}.
  \]
\end{lemma}

\begin{proof}
  If $n=2$ it is clear that $\calC=\langle(1,1)\rangle_{\F_q}$.
  Let $n > 2$. Expanding the above expression, we see that $\calC$ has no codewords of odd weight.
  Moreover, the number of codewords of length $2$ is $a_{n-2} = an/2 \neq 0$.
  Let $r:= a_{n-2}/(q-1)$ and let $c_1,\ldots,c_r$ be a set of
  codewords of weight $2$ such that $c_i \neq \lambda  c_j$ for any $\lambda \in \F_q$ and $i\neq j$. They have disjoint supports by Corollary~\ref{cor:disj}.

  Let $S := \bigcup_i\supp c_i$ and let $\calC_S:= \langle c_1, \dotsc, c_r\rangle_{\F_q}$.
  Every codeword $x \in \calC$ can be written as a sum
  \[
  x = y + z, \quad y, z \in \F_q^n,
  \]
  with $\supp(y) \subseteq S$ and $\supp (z) \cap S = \emptyset$.
  By Lemma~\ref{lem:oddweight}, $y$ is in $\calC_S \subseteq \calC$ and thus so is $z$.
  Consequently, $\calC$ is the direct sum
  \[
  \calC = \calC_S \oplus \calC_{S^c},
  \]
  where $\calC_{S^c}=\{c\in \calC \ | \ \supp(c)\cap S
  =\emptyset\}$.
  This implies that $w_\calC(x,y) = w_{\calC_S}(x,y) \cdot w_{\calC_{S^c}}(x,y)$.

  Now observe that $\calC_S$ is monomially equivalent to the code $\bigoplus_{i=1}^r \langle (1,1) \rangle_{\F_q}$,
  and hence its weight enumerator is $w_{\calC_S}(x,y) = (x^{2} + (q-1)y^2)^r$.
  Therefore, we must have $a = (q-1)$. By induction, $\calC_{S^c}\cong \bigoplus_{i=1}^{n/2-r}\langle(1,1)\rangle_{\F_q}$ so that
  $\calC \cong \bigoplus_{i=1}^{n/2} \langle ( 1,1) \rangle_{\F_q}$, as desired.
\end{proof}

Let us now prove the classification theorem for codes whose weight
enumerator has an infinite group of symmetries. Notice that the weight enumerator $w_\calC(x,y)$ of a code $\calC$ satisfies $\#V(w_\calC(x,y))<3$ if an only if $w_\calC(x,1)$ has at most two distinct roots in $\overline{\Z}$ (the ring of algebraic integers).

\begin{theorem}\label{thm:2roots}
Let $\calC \subseteq \F_q^n$ be a linear code with weight enumerator
$w_\calC(x,1) \in \Z[x]$ having at most two distinct roots in
$\overline{\Z}$. Then only the following possibilities can hold:
\begin{itemize}
\item[(a)] $w_\calC(x,y)=x^n$ and $\calC=\{\underline{0}\}$;
\item[(b)] $w_\calC(x,y)=(x+(q-1)y)^n$ and $\calC= \F_q^n$;
\item[(c)] $n$ is even, $w_\calC(x,y)=(x^2+(q-1)y^2)^{n/2}$ and, if $q\ne 2$,
$\calC\cong \bigoplus_{i=1}^{n/2}\langle(1,1)\rangle$.
\end{itemize}
\end{theorem}

\begin{proof}
  Let $-a$, $-b$ be the roots of $w_\calC(x,1)$ in $\overline{\Z}$, so that
  $w_\calC(x) = (x + a)^r (x+ b)^{n-r}$ for $r \in \N$.
  The number of codewords in $\calC$ of weight one is then $l := ra + (n-r)b$.

  First, assume that $l \neq 0$ and let $m = l/(q-1)$.
  Taking arbitrary linear combinations of the codewords of weight one gives a copy of $\F_q^m$ in $\calC$.
  Therefore, $\calC = \calC_1 \oplus \calC_2$,
  with $\calC_1=\F_q^m$ and $\calC_2:=\{c\in \calC \ | \ \supp(c)\cap \supp(d)=\emptyset \ \forall d\in
  \calC_1\}$. Hence
  \[
  w_\calC(x,y) = (x + (q-1)y)^m \cdot w_{\calC_2}(x,y).
  \]
  Consequently, $-(q-1)$ is a root of $f$; we may assume w.l.o.g. that $a = q-1$.
  We get
  \[
  m = \frac{ra + (n-r)b}{q-1} = r + \frac{(n-r)b}{q-1} \geq r.
  \]
  Hence, either $b=0$ and $r=n$, or $(x + (q-1)y)$ divides $(x+by)^{n-r}$,
  which implies $b=q-1$. Both cases give that
  $w_\calC(x,y) = (x + (q-1)y)^n$.
  But this implies $\#\calC = w_\calC(1,1) = q^n = \#\F_q^n$, whence $\calC = \F_q^n$, as
  desired.

  Now assume $\calC$ has no codewords of weight one, i.e. $l = ra+ (n-r)b = 0$.
  If $a$ is real then so is $b$, and both must be non-negative: since $w_\calC(x,y)$ is non-zero and has
  positive coefficients, $w_\calC(r,1)>0$ for any real $r>0$, so $w_\calC(x,1)$ has only non-positive roots.
  Since $ra = -(n-r)b$, we must have $a = b = 0$ whence $w_\calC(x,1)$ has only one root and $\calC = \{\underline{0}\}$.

  If $a$ is non-real, then $a$ and $b$ are complex conjugate algebraic integers, and we must have
  $r = s$, which is possible only if $n$ is even.
  Consequently,
  \[
  w_\calC(x,y) = (x^2 + \Tr(a)xy + \N(a)y^2)^{n/2}
  \]
  where $\Tr(a) = a + \bar a$ and $\N(a) = a\bar a$.
  The fact that $\calC$ has no codeword of weight one implies that $\Tr(a) = 0$ and hence
  \[
  w_\calC(x,y) = (x^2 + \N(a)y^2)^{n/2}.
  \]

  If $q \neq 2$, Lemma~\ref{lem:2roots} gives the desired conclusion about
  $\calC$.
  If $q = 2$, we must show that $\N(a) = 1$. Since $a$ is an
  algebraic integer, $\N(a)\in \Z$. Since $q=2$, the number of
  codewords of weight $n$ is
  \[
  \N(a)^{n/2}=1
  \]
  so that $\N(a)=\pm 1$. But $\N(a)=-1$ is impossible, since $w_\calC(x,y)$ has non-negative coefficients.
\end{proof}

Note that Theorem \ref{thm:2roots} almost classifies, up to monomial
equivalence, all linear codes having weight enumerator with at most
two distinct roots in $\overline{\Z}$. The case $q=2$ is left
unsolved and seems to be a difficult problem. If $q=2$, the
sum of two codewords of weight $2$ cannot have weight $3$, so
the argument in the proof of Lemma \ref{lem:2roots} does not work.

\begin{question}\label{que:class}
Is it possible to classify binary codes of length $n$ with
weight enumerator $(x^2+y^2)^{n/2}$?
\end{question}

Let $\calC$ and $\calC'$ be two codes with weight enumerator
$(x^2+y^2)^{n/2}$ and $(x^2+y^2)^{n'/2}$ respectively. Then
$\calC\oplus\calC'$ has weight enumerator $(x^2+y^2)^{(n+n')/2}$.
Hence, if we let
\[
\mathcal{M}:=\{\text{binary codes of length} \
n \ \text{and weight enumerator} \ (x^2+y^2)^{n/2} \ | \ n\in
2\N\},
\]
we have that $(\mathcal{M},\oplus)$ is a semigroup; in
order to answer positively to Question \ref{que:class} it thus suffices
to find all irreducible elements in $\mathcal{M}$, which means to
find a minimal set of generators of $(\mathcal{M},\oplus)$.
As usual, we consider elements in $\mathcal{M}$ as classes of codes up
to equivalence.

Clearly, the generator with minimum length is the $[2,1,2]$ code
$\mathcal{X}_1:=\langle(1,1)\rangle$. Furthermore, every element in
$\mathcal{M}$ is formally self-dual and all formally self-dual codes
up to length $16$ are classified in \cite{BH01}. From an analysis of
the tables in the paper we find that, up to length $16$, there are
exactly $4$ other irreducible elements of $\mathcal{M}$, namely the
formally self-dual (but not self-dual) $[6,3,2]$ code
$\mathcal{X}_2$ with generator matrix
\[
\left[\begin{smallmatrix}
1 & 0 & 0 & 1 & 1 & 1 \\
0 & 1 & 0 & 1 & 1 & 1 \\
0 & 0 & 1 & 1 & 1 & 1
\end{smallmatrix}\right]
\]
and three $[14,7,2]$ codes, which we call
$\mathcal{X}_3,\mathcal{X}_4$ and $\mathcal{X}_5$, with generator
matrices $[I|X_3]$,$[I|X_4]$ and $[I|X_5]$ respectively, where
\[
X_3:=\left[\begin{smallmatrix}
1&1&1&0&0&0&0\\
1&1&1&0&0&0&0\\
1&1&1&0&0&0&0\\
1&0&0&1&1&1&1\\
1&0&0&1&1&1&1\\
1&0&0&1&1&1&1\\
1&0&0&0&0&0&0
\end{smallmatrix}\right], \quad
X_4:= \left[\begin{smallmatrix}
1&1&1&1&1&0&0\\
1&1&1&1&1&0&0\\
1&1&1&1&1&0&0\\
1&1&1&1&1&0&0\\
1&1&1&1&0&1&0\\
1&1&1&1&0&1&0\\
1&1&1&1&1&1&1
\end{smallmatrix}\right], \quad
X_5:= \left[\begin{smallmatrix}
1&0&1&0&1&0&0\\
1&0&1&0&1&0&0\\
1&0&1&0&1&0&0\\
1&0&1&0&1&0&0\\
1&1&1&0&1&0&1\\
1&1&1&0&1&0&1\\
1&1&1&1&1&1&1
\end{smallmatrix}\right],
\]
and $I$ is the $7\times 7$ identity matrix. It is not clear how to
construct other generators and it seems already too complex for
a software like {\sc Magma} \cite{Magma}. It is not obvious whether or not there
are infinitely many such generators.

We conclude this section showing a relation between our result and
the Gleason-Pierce Theorem (cf. \cite{Ken94}). Recall
that a code is divisible if there exists an integer $m>1$ such
that the weight of every codeword of $\calC$ is divisible by
$m$.

\begin{theorem}[Gleason-Pierce]
Let $\calC$ be a formally self-dual divisible code. Then
\begin{itemize}
\item $q=2$ and $m\in \{2,4\}$,
\item $q=3$ and $m=3$,
\item $q=4$ and $m=2$,
\item or $q$ arbitrary, $m=2$ and
$w_\calC(x,y)=(x^2+(q-1)y^2)^{n/2}$.
\end{itemize}
\end{theorem}

Hence, Theorem \ref{thm:2roots} implies the following.

\begin{corollary}
For $q>4$, if $\calC$ is a formally self-dual divisible code of
length $2n$, then $\calC$ is equivalent to the direct sum of $n$
copies of $\langle(1,1)\rangle_{\F_q}$.
\end{corollary}

\section{The algorithm}\label{sec:algo}

The proof of Lemma~\ref{lemma-finite} gives an algorithm to find the
stabilizer of every weight enumerator.

Let $\calC$ be a linear code. Suppose that its weight
enumerator $w_\calC(x,y)$ is known and of degree $n$.
\begin{itemize}
\item[1.] Set $G:=\emptyset$.
\item[2.] Calculate $V := \{z_1, \dotsc,z_n\}$ the set of roots of
$w_\calC(x,1)$.
\item[3.] Call $V_3$ the set of all ordered 3-subsets of
$V$.\\
Clearly we have $\#V_3=\frac{1}{6}n^3 - \frac{1}{2}n^2 +
\frac{1}{3}n$.
\item[4.] For every triple $\{w_1,w_2,w_3\}\in V_3$:
\begin{itemize}
\item[4a.] Solve the system $z_ia+b-w_iz_ic-w_i d=0$, $i\in \{1,2,3\}$, where the unknowns are $a,b,c,d$.
It has clearly infinitely many solutions depending on one complex
parameter $\lambda$ (the action of $\PGL_2(\C)$ is sharply
3-transitive, as we said). Call
$\underline{a},\underline{b},\underline{c},\underline{d}$ one
solution.
\item[4b.] If $\{\frac{\underline{a}z_i+\underline{b}}{\underline{c}z_i+\underline{d}} \ | \ z_i\in
V\}=V$, then
\begin{itemize}
\item[4bi.] Let
$A:=\left[\begin{smallmatrix}\underline{a}&\underline{b}\\\underline{c}&\underline{d}\end{smallmatrix}\right]$.
\item[4bii.] Calculate $\lambda:=\frac{w_\calC(\underline{b},\underline{d})}{w_\calC(0,1)}$.
\item[4biii.] Let $G:=G\cup\{\zeta_{n}\lambda^{1/n}A\ | \ \zeta_{n}\in \C \ \text{s.t.} \ \zeta_n^n=1\}$.
\end{itemize}
\end{itemize}
\end{itemize}

Then $G$ is equal to $S(w_\calC(x,y))$.
\vspace{2mm}

This algorithm can be implemented easily in {\sc Magma}, but there
is a problem for Step 2: in $\C$, we do not have access to the
exact roots but only to approximations.
There are two ways to solve this.
The first one is to consider the splitting field of $w_\calC(x,1)$
instead of $\C$.
This gives exact results but is computationally more expensive.
The second one is to use approximations of the roots and control the error
to find an approximated version of the stabilizer.
This is done in the Master thesis of the second author \cite{BFBM}.

Finally, we give a small lemma that allows one to prove the triviality of the group of
symmetries of a given polynomial.
Recall that the cross ratio of four points $(z_1: 1), \dotsc, (z_4 :
1)$ is defined as
\[
[z_1, z_2, z_3, z_4] := \frac{(z_1- z_3)(z_2 - z_4)}{(z_1 - z_4)(z_2
- z_3)}.
\]
Make the symmetric group $S_4$ acts on the cross ratios by permuting
the points, and observe that for any $\sigma \in V_4:= \{\id,
(12)(34), (13)(24), (14) (23)\}$, we have $[z_{\sigma(1)},
z_{\sigma(2)}, z_{\sigma(3)}, z_{\sigma(4)}] = [z_1, z_2, z_3,
z_4]$.

Let $\mathcal{Z}$ be a set of at least four complex points. A
4-tuple of distinct elements $(z_1, z_2, z_3, z_4) \in \mathcal{Z}^4$
will be called \emph{critical} if for any 4-tuple of
distinct elements $(y_1, y_2, y_3, y_4) \in \mathcal{Z}^4$, we have
$[z_1, z_2, z_3, z_4] = [y_1, y_2, y_3, y_4]$ if and only if
$(y_1, y_2, y_3, y_4)= (z_{\sigma(1)}, z_{\sigma(2)}, z_{\sigma(3)}, z_{\sigma(4)})$ for some $\sigma\in V_4$.

\begin{lemma}\label{lem-trivial}
Let $p(x,y)\in \C[x,y]$ be a polynomial with $5$
roots $z_1, z_2, z_3, z_4, z_5$ of $p(x,1)$ such that both $(z_1,
z_2, z_3, z_4)$ and $(z_1, z_2, z_3, z_5)$ are critical. Then
$S(p(x,y))$ is trivial.
\end{lemma}

\begin{proof}
Every $\bar g$ in $\PGL_2(\C)$ sends $\{z_1, z_2, z_3, z_j\}$ to itself, for
$j=4,5$, since it must preserve the cross ratio of these four
points.
  If $\bar g$ sends $z_4$ to $z_j$ for $j \in \{1,2,3\}$, then from the fact that $\bar g$ fixes
  $\{z_1, z_2, z_3, z_5\}$ if follows that some element of this set is also sent to $z_j$,
  contradicting the injectivity of $\bar g$.
  Thus $\bar g z_4 = z_4$.
  But the only permutation of $(z_1, z_2, z_3, z_4)$ which fixes the cross ratio and sends $z_4$ to
  $z_4$ is the identity.
  Hence $\bar g$ fixes four points of $\P^1(\C)$.
  Since the action is sharply 3-transitive, $\bar g = \id$ and the conclusion follows.
\end{proof}

\section{Reed-Muller codes}\label{sec:Reed-Muller}
In this section we study Reed-Muller codes in deeper detail.
The first subsection is focused on the binary Reed-Muller codes
and contains the proof of Theorem \ref{thm:reed-muller-sym} as well as a table listing the groups of symmetries of some small Reed-Muller codes.
The second subsection gives some corresponding tables for Reed-Muller codes over larger fields, some classification results and remarks for further development.

\subsection{Binary Reed-Muller codes}
For the next lemma, let us define
\[
  A := \left\langle
    \left[\begin{smallmatrix}
      1& 0 \\
      0 & \zeta'_4
    \end{smallmatrix}\right],
    \left[\begin{smallmatrix}
      0& 1 \\
      1 & 0
    \end{smallmatrix}\right]
  \right \rangle \subseteq \GL_2(\C), \quad \text{and} \quad
  \bar A = \text{image of }A \text{ in } \PGL_2(\C).
  \]

\begin{lemma}\label{lemma:imp}
  Let $\calC$ be a code with $\bar A \subseteq \barS(w_\calC(x,y))$.
  Then either
  \begin{enumerate}
  \item $\barS(w_\calC(x,y)) = \bar A \cong D_4$ or
  \item \label{itL:self-dual} $\calC$ is formally self-dual or
  \item \label{itL:other}${w_\calC}^v(x,y) = \pm w_\calC(x,y)$, where
    $
      v =
      \frac{1}{\sqrt 2}
      \left[\begin{smallmatrix}
        1& \zeta_8' \\
        \zeta_8'^{-1} & -1
      \end{smallmatrix}\right].
    $
  \end{enumerate}
\end{lemma}
\begin{proof}
  Since $\barS(w_\calC(x,y))$ contains $\bar A \cong D_4$, Blichfeldt's Theorem
  implies that either $\barS(w_\calC(x,y)) \cong D_4$ or $\barS(w_\calC(x,y)) \cong S_4$.
  We assume that $\barS(w_\calC(x,y)) \cong S_4$ and we must show that part \ref{itL:self-dual} or \ref{itL:other} is satisfied.

  By Blichfeldt's Theorem again we see that $\barS(w_\calC(x,y))$ is conjugate to
  $\bar B := \langle \bar A, \bar w \rangle \subseteq \PGL_2(\C)$, where $\bar w$ is the image in
  $\PGL_2(\C)$ of $w :=
    \frac{1}{\sqrt 2}
    \left[\begin{smallmatrix}
      1& 1 \\
      1 & -1
    \end{smallmatrix}\right].
  $
  Let $\bar g \in \PGL_2(\C)$ be such that $\barS(w_\calC(x,y))  = \bar g \bar B \bar g^{-1}$.
  The group $\bar g \bar A \bar g^{-1} \subseteq \barS(w_\calC(x,y))$ is isomorphic to $D_4$, and since all copies
  of $D_4$ in $S_4$ are conjugate we may assume that $\bar g$ normalizes $\bar A$.
  A straightforward calculation then shows that $\bar g$ must be the image in $\PGL_2(\C)$ of a
  matrix of the form
  \[
    g_1 = \left[\begin{smallmatrix}
      \zeta_{16} & 0 \\
      0 & \zeta_{16}^{-1}
    \end{smallmatrix}\right] \quad \text{or} \quad
    g_2 = \left[\begin{smallmatrix}
      0& \zeta_{16} \\
      -\zeta_{16}^{-1} & 0
    \end{smallmatrix}\right].
  \]

  Let $g \in \GL_2(\C)$ be a pre-image of $\bar g$ of this form.
  We have that $\barS(w_\calC(x,y)) = \langle \bar A, \bar w' \rangle$ with $\bar w'$ the image in
  $\PGL_2(\C)$ of $w' := g w g^{-1}$, and it follows that there exists some $\lambda \in \C^\times$
  with $\lambda w' \in S(w_\calC(x,y))$.
  Let $p(x,y) =w_\calC(x,y)$.
  Since $w'^2 = I$, we have $p(x,y) = p^{(\lambda w')^2}(x,y) = p^{\lambda ^2I}(x,y) = \lambda^{2 \deg p(x,y)} p(x,y)$
  whence $\lambda^{2 \deg p(x,y)} = 1$.
  Therefore $p^{w'}(x,y) = (p^{(\lambda w')})^{(\lambda^{-1} I)}(x,y) = \lambda^{- \deg p(x,y)} p(x,y) = \pm  p(x,y)$,
  and so $p(x,y)$ is either invariant or \emph{anti-invariant} under $w'$.

  Let us define the following polynomials:
  \[
    f_1(x,y) = w_{\hat{\mathcal{H}}_3}(x,y), \;
    f_2(x,y) = w_{\mathcal{G}_{24}}(x,y),\;
    f_3(x,y) = x^{12} - 33 x^8 y^4 - 33 x^4 y^8 + y^{12}.
  \]
  Recall that, by Gleason's Theorem, the invariant ring of $B := \langle A, w\rangle$ is the ring $\C[f_1(x,y), f_2(x,y)]$
  and hence the invariant ring of $B' := \langle A, w' \rangle$ is easily seen to be
  $\C[f_1'(x,y), f_2'(x,y)]$ with $f_i' = f_i \circ g$ for $i = 1,2$.
  Moreover, if a polynomial is invariant under $A$ but
  \emph{anti-invariant} under $w$, it must be fixed by $C:= \langle A, wAw \rangle$ whence in the invariant ring of $C$ which is
  $\C[f_1(x,y), f_3(x,y)]$.
  Now it is easy to see that the anti-invariance under $w$ implies that the polynomial is in fact in $f_3(x,y)\cdot \C[f_1(x,y), f_2(x,y)]$.
  This implies finally that a polynomial is invariant under $A$ but anti-invariant under $w'$ if and only if it is in $f_3'(x,y)\cdot\C[f_1'(x,y), f_2'(x,y)]$ with $f_i' = f_i \circ g$ for $i=1,2,3$.

  Now if $\zeta_{16}$ is not primitive, then $f_i \circ g = f_i$ for $i = 1,2,3$.
  Therefore if $p(x,y)$ is invariant (resp. anti-invariant) under $w'$, then it is also
  invariant (resp. anti-invariant) under $w$.
  Now $p^w(x,y) $ is the weight enumerator of the dual code and has positive coefficients, so anti-invariance under $w$ is excluded.
  Moreover, $p^w(x,y) = p(x,y)$ if and only if $\calC$ is formally self-dual, which is
  part \ref{itL:self-dual} of the lemma.

  Finally, if $\zeta_{16}$ is primitive then $w'$ is the $v$ from part \ref{itL:other} of the lemma, and the proof is complete.
\end{proof}

Let us continue with the proof of Theorem \ref{thm:reed-muller-sym}.

\begin{proof}[Proof of Theorem \ref{thm:reed-muller-sym}]
  Part \ref{itB:mleqr} is trivial since these codes are simply the whole space $\F_2^{2^m}$.
  For part \ref{itB:m=2r+1} observe that these Reed-Muller codes are precisely the
  self-dual doubly-even ones, and so the result follows from Gleason's Theorem.
  Now remark that part \ref{itB:r<m<2r+1,jneq2} (resp. part \ref{itB:r<m<2r+1,j=2}) is the dual version of part \ref{itB:m>2r+1,ineq2} (resp. part \ref{itB:m=2r+1,i=2}).
  Hence the result follows by replacing $S$ with its conjugate by $w$, and by replacing the
  polynomials by their composition with $w$ (where $w$ is the Mac-Williams' transformation).

  We are left with proving \ref{itB:m>2r+1,ineq2} and \ref{itB:m=2r+1,i=2}.
  For \ref{itB:m>2r+1,ineq2}, we have $i > 2$ and so the divisibility condition and the symmetry in the variables $x$ and $y$ gives that the weight enumerator is invariant under the matrices
  \[
  d_{2^i} =
  \left[\begin{smallmatrix}
    1 & 0 \\
    0 & \zeta'_{2^i}
  \end{smallmatrix}\right] \quad \text{and} \quad
  t = \left[\begin{smallmatrix}
    0 & 1 \\
    1 & 0
  \end{smallmatrix}\right].
  \]
  Hence the corresponding group in $\PGL_2(\C)$ must contain $\langle d_{2^i}, t \rangle \cong D_{2^i}$.
  We want to show that this group of symmetries cannot be larger.
  Using Blichfeldt's Theorem again, this amounts to showing that this group cannot be
  isomorphic to some $D_k$ containing $\langle d_{2^i}, t \rangle$.
  It is not hard to prove that any such group must contain $d_{k} = \left[\begin{smallmatrix}1 & 0 \\ 0 &\zeta_{k}\end{smallmatrix}\right]$ and therefore
  we simply need to show that our code cannot be divisible by some integer $k > 2^i$
  with $2^i | k$.
  The divisibility condition in Ax's theorem is a strict one, and so it cannot
  be divisible by a larger power of $2$.
  Moreover, the minimum distance of $\RM(r,m)$ being $2^{m-r}$ (see \cite{AK92}) we see that any
  integer by which the code is divisible must be a power of 2.
  Therefore the group $\barS(w_\calC(x,y))$ must be equal to $D_{2^i}$.
  It is now an easy exercise to show that a polynomial is symmetric with powers a multiple of $2^i$
  if and only if it is in the ring
  $\C[x^{2^i} + y^{2^i}, x^{2^{i+1}} + 2 (2^{i+1}-1) x^{2^i}y^{2^i}+ y^{2^{i+1}}]$,
  which is precisely the ring mentioned in the theorem.

  For \ref{itB:m=2r+1,i=2}, the assertion about the invariant ring is the same as in \ref{itB:m>2r+1,ineq2}, and the rest follows from
  Lemma \ref{lemma:imp}, observing that these Reed-Muller codes are not formally self-dual.
  \end{proof}

The following table gives the group of symmetries in $\PGL_2(\C)$ of some small Reed-Muller codes up to conjugation.
It was computed using known facts about the weight enumerators and the algorithm of Section \ref{sec:algo}.
\begin{center}
\begin{table}[h]
\caption{$\barS(w_{\mathcal{RM}_2(r,m)}(x,y))$}
\setlength{\doublerulesep}{-1pt}
\begin{tabular}{|c||c|c|c|c|c|c|c|}
  \hline
 $r\backslash m$
    & 1        & 2        & 3        & 4        & 5        & 6  & 7      \\
  \hline
  \hline
  0 & $\infty$ & {$D_4$}    & {$D_8$}    &
  {$D_{16}$}
   & {$D_{32}$} & {$D_{64}$} & {$D_{128}$}\\
  \hline
  1 & $\infty$ & {$D_4$}    & {$S_4$}    & {$D_{8}$}  & {$D_{16}$} & {$D_{32}$} & {$D_{64}$}\\
  \hline
  2 & $\infty$ & $\infty$ & {$D_8$}    & {$D_{8}$}  & {$S_4$}    & {$D_{4}$} & {$D_{8}$} \\
  \hline
  3 & $\infty$ & $\infty$ & $\infty$ & {$D_{16}$} & {$D_{16}$} & {$D_{4}$} & {$S_4$}  \\
  \hline
  4 & $\infty$ & $\infty$ & $\infty$ & $\infty$ & {$D_{32}$} & {$D_{32}$} & {$D_{8}$} \\
  \hline
  5 & $\infty$ & $\infty$ & $\infty$ & $\infty$ & $\infty$ & {$D_{64}$} & {$D_{64}$}\\
  \hline
  6 & $\infty$ & $\infty$ & $\infty$ & $\infty$ & $\infty$ & $\infty$ & {$D_{128}$} \\
  \hline
  \end{tabular}
\end{table}
\end{center}

\subsection{Further results}

In this subsection we give some results on the group of symmetries of Reed-Muller codes over larger fields.

\begin{theorem}\label{thmsd}
  If $q$ is even, $m$ is odd and $2r=m(q-1)-1$ then
  \[
    \barS(w_{\mathcal{RM}_q(r,m)}(x,y))\cong S_4.
    \]
\end{theorem}

\begin{proof}
Under these hypotheses $\mathcal{RM}_q(r,m)$ is self-dual, so Gleason's theorem holds.
\end{proof}

\begin{theorem}\label{thmd}
If
\begin{itemize}
\item $q\in\{3,4,5\}$ and $m\geq 2r+1$, or
\item $q>5$ and $m\geq r+1$,
\end{itemize}
then $\barS(w_{\mathcal{RM}_q(r,m)}(x,y))$ and $\barS(w_{\mathcal{RM}_q(m(q-1)-r-1,m)}(x,y))$ are either cyclic
or dihedral.
\end{theorem}
\begin{proof}
Under these hypotheses, $\mathcal{RM}_q(r,m)$ is divisible by some $n=q^{\left\lfloor \frac{m-1}{r}\right\rfloor} > 5$ by Ax's Theorem.
Therefore
$C_n\cong \langle\left[\begin{smallmatrix}
1&0\\
0& \zeta'_n
\end{smallmatrix}\right]\rangle
\subseteq \barS(w_{\mathcal{RM}_q(r,m)}(x,y))$.
By Blichfeldt's Theorem,  $\barS(w_{\mathcal{RM}_q(r,m)}(x,y))$ must be cyclic or dihedral. The same holds for $\barS(w_{\mathcal{RM}_q(m(q-1)-r-1,m)}(x,y))$, the group of symmetries of the dual, so that it is conjugate to $\barS(w_{\mathcal{RM}_q(r,m)}(x,y))$ by MacWilliams' transformation.
\end{proof}

Notice that with the same arguments, one can prove a similar result for projective Reed-Muller codes.

\begin{remark}
Theorem \ref{thmsd} and \ref{thmd} are the first steps in the understanding of a general picture for the groups of symmetries of Reed-Muller codes over larger fields. However, some of the arguments we used to prove Theorem \ref{thm:reed-muller-sym} do not work in the general case.

The following two tables give some insight on the shape of the group of symmetries of some Reed-Muller codes over $\F_3$ and $\F_4$. They are obtained using the algorithm presented in Section \ref{sec:algo}. Notice that three cells are left empty, because for those parameters the calculation of the weight enumerator takes too much time.
\begin{center}
\begin{table}[h]
\caption{$\barS(w_{\mathcal{RM}_3(r,m)}(x,y))$}
\setlength{\doublerulesep}{-1pt}
\begin{tabular}{|c||c|c|c|c|}
  \hline
 $r\backslash m$
    & 1                         & 2                         & 3                               & 4                            \\
  \hline
  \hline
  0 & {$D_3$} & {$D_9$} & {$D_{27}$}    & {$D_{81}$} \\
  \hline
  1 & {$D_3$} & {$C_3$}    & {$C_9$}          & {$C_{27}$}    \\
  \hline
  2 & $\infty$                  & {$C_3$}    & {$C_3$}        & {$C_{3}$}   \\
  \hline
  3 & $\infty$                  & {$D_9$} & {$C_3$}        &         \\
  \hline
  4 & $\infty$                  & $\infty$                  & {$C_9$}          &         \\
  \hline
  5 & $\infty$                  & $\infty$                  & {$D_{27}$}    & {$C_{3}$}   \\
  \hline
  6 & $\infty$                  & $\infty$                  & $\infty$                        & {$C_{27}$}    \\
  \hline
  7 & $\infty$                  & $\infty$                  & $\infty$                        & {$D_{81}$} \\
  \hline
  \end{tabular}
\end{table}

\begin{table}[h]
\caption{$\barS(w_{\mathcal{RM}_4(r,m)}(x,y))$}
\setlength{\doublerulesep}{-1pt}
\begin{tabular}{|c||c|c|c|}
  \hline
 $r\backslash m$
    & 1                         & 2                                & 3                               \\
  \hline
  \hline
  0 & {$D_8$} & {$D_{16}$}     & {$D_{64}$}    \\
  \hline
  1 & {$V_4$}    & {$C_4$}         & {$C_{16}$}     \\
  \hline
  2 & {$D_8$} & {$\{{\rm Id}\}$} & {$C_4$}         \\
  \hline
  3 & $\infty$                  & {$\{{\rm Id}\}$} & {$\{{\rm Id}\}$}\\
  \hline
  4 & $\infty$                  & {$C_4$}         &             \\
  \hline
  5 & $\infty$                  & {$D_{16}$}     & {$\{{\rm Id}\}$}\\
  \hline
  6 & $\infty$                  & $\infty$                         & {$C_4$}         \\
  \hline
  7 & $\infty$                  & $\infty$                         & {$C_{16}$}     \\
  \hline
  8 & $\infty$                  & $\infty$                         & {$D_{64}$}    \\
  \hline
  \end{tabular}
\end{table}

\end{center}

A remarkable case is that of $\mathcal{RM}_4(1,1)$:
\[
\barS(w_{\mathcal{RM}_4(1,1)}(x,y))=\left\langle\left[\begin{smallmatrix}3-\sqrt{-15}& 6+2\sqrt{-15} \\-4 & \sqrt{-15}-3\end{smallmatrix}\right],\left[\begin{smallmatrix}1&3 \\ 1&-1 \end{smallmatrix}\right] \right\rangle\cong V_4
\]
and $w_{\mathcal{RM}_4(1,1)}(x,y)\in \C[f_1(x,y),f_2(x,y)]$, where
$f_1(x,y):=2x^2 + (3+\sqrt{-15}) xy + (3-\sqrt{-15}) y^2$ and $f_2(x,y):=53x^4 - 36
x^3y - 18 x^2y^2 + 636xy^3 + 213y^4$. The symmetry $\left[\begin{smallmatrix}3-\sqrt{-15}& 6+2\sqrt{-15} \\-4 & \sqrt{-15}-3\end{smallmatrix}\right]$ does not come from divisibility conditions nor from MacWilliams' identities.

\end{remark}

\section*{Acknowledgements}
The first author was partially supported by PEPS - Jeunes Chercheur-e-s - 2017.
Some of the results of this paper are contained in the Master thesis
\cite{BFBM} of the second author. The authors express their
gratitude to Eva Bayer Fluckiger and Peter Jossen for their support
and the fruitful discussions. Moreover, the authors warmly thank the two anonymous reviewers which helped to improve a previous version of this paper.

\bibliographystyle{alpha}

\bigskip 

\noindent MARTINO BORELLO
\vspace{1mm}\\
{\sc Universit\'e Paris 13, Sorbonne Paris Cit\'e,\\
LAGA, CNRS, UMR 7539,\\
Universit\'e Paris 8,\\
F-93430, Villetaneuse\\
France}
\vspace{1mm}\\
E-mail address:
\href{mailto:martino.borello@univ-paris8.fr}{martino.borello@univ-paris8.fr}

\vspace{3mm}

\noindent OLIVIER MILA
\vspace{1mm}\\
{\sc Universit\"{a}t Bern\\
Mathematisches Institut (MAI)\\
Sidlerstrasse 5\\
CH-3012 Bern\\
Switzerland}
\vspace{1mm}\\
E-mail address:
\href{mailto:olivier.mila@math.unibe.ch}{olivier.mila@math.unibe.ch}

\end{document}